\documentclass[conference]{IEEEtran}
\IEEEoverridecommandlockouts
% The preceding line is only needed to identify funding in the first footnote. If that is unneeded, please comment it out.
\usepackage{cite}
\usepackage{amsmath,amssymb,amsfonts}
\usepackage{dsfont}
\usepackage{algorithmic}
\usepackage{graphicx}
\usepackage{textcomp}
\usepackage{xcolor}

\usepackage{mathtools}

\DeclarePairedDelimiter{\floor}{\lfloor}{\rfloor}
\usepackage{amsthm}
\usepackage{arydshln}
\usepackage{multirow}
\usepackage{calc}
\usepackage{blkarray}
\usepackage{url}
\theoremstyle{definition}
\newtheorem{theorem}{Theorem}

\newtheorem{example}{Example}
\newtheorem{remark}{Remark}

\usepackage{graphicx,cite}
\usepackage{dblfloatfix}
\usepackage[justification=centering]{caption}
\usepackage{blindtext, graphicx, amsfonts,
	amssymb,multirow,epstopdf}
\usepackage[linesnumbered,ruled]{algorithm2e}
\def\BibTeX{{\rm B\kern-.05em{\sc i\kern-.025em b}\kern-.08em
    T\kern-.1667em\lower.7ex\hbox{E}\kern-.125emX}}

\newcommand{\calB}{\mathcal{B}}
\newcommand{\calC}{\mathcal{C}}
\newcommand{\calA}{\mathcal{A}}

\newcommand{\calU}{\mathcal{U}}
\newcommand{\calV}{\mathcal{V}}

\newcommand{\bfA}{\mathbf{A}}

\newcommand{\bfx}{\mathbf{x}}
\newcommand{\bfw}{\mathbf{w}}
\newcommand{\bfc}{\mathbf{c}}

\linespread{0.99}
\begin{document}

\title{$C^{3}LES$: Codes for Coded Computation that Leverage Stragglers}

\author{\IEEEauthorblockN{Anindya B. Das, Li Tang and Aditya Ramamoorthy}
\IEEEauthorblockA{Department of Electrical and Computer Engineering \\
Iowa State University\\
Ames, IA 50010, U.S.A. \\
\{abd149,litang,adityar\}@iastate.edu}
\thanks{This work was supported in part by the National Science Foundation (NSF) under grant CCF-1718470.}
}

\maketitle

\begin{abstract}
In distributed computing systems, it is well recognized that worker nodes that are slow (called stragglers) tend to dominate the overall job execution time. Coded computation utilizes concepts from erasure coding to mitigate the effect of stragglers by running ``coded'' copies of tasks comprising a job. Stragglers are typically treated as erasures in this process.

While this is useful, there are issues with applying, e.g., MDS codes in a straightforward manner. Specifically, several applications such as matrix-vector products deal with sparse matrices. MDS codes typically require dense linear combinations of submatrices of the original matrix which destroy their inherent sparsity. This is problematic as it results in significantly higher processing times for computing the submatrix-vector products in coded computation. Furthermore, it also ignores partial computations at stragglers.

In this work, we propose a fine-grained model that quantifies the level of non-trivial coding needed to obtain the benefits of coding in matrix-vector computation. Simultaneously, it allows us to leverage partial computations performed by the straggler nodes. For this model, we propose and evaluate several code designs and discuss their properties.
\end{abstract}

\begin{IEEEkeywords}
Distributed computing, stragglers
\end{IEEEkeywords}

\section{Introduction}
Distributed computation plays a major role in several problems in machine learning. For example, large scale matrix-vector multiplication is repeatedly used in gradient descent which is typically used in high dimensional machine learning problems. The size of the underlying matrices makes it impractical to perform the computation on a single computer (both from a speed and a storage perspective). Thus, the computation is typically subdivided into smaller tasks that are run in parallel across multiple worker nodes.

%different problems in signal processing and machine learning where large scale numerical calculations are required

%, for example, matrix-vector multiplication in the gradient descent problem in optimization algorithms. For a single machine, a large numerical process takes quite a long time to finish which can be done significantly faster if the same process is carried out in a distributed fashion \cite{zaharia2010spark}.

In these systems the overall execution time is typically dominated by the speed of the slowest worker. Thus, the presence of stragglers (as these slow workers are called) can negatively impact the performance of distributed computation. In recent years, techniques from coding theory \cite{lee2018speeding,dutta2016short,yu2017polynomial,tandon2017gradient} have been used to mitigate the effect of stragglers for problems such as matrix-vector and matrix-matrix multiplication. For instance, the work of \cite{lee2018speeding} proposes to partition the computation of $\bfA \bfx$ by first splitting $\bfA^T = [\bfA_1^T~\bfA_2^T]^T$ into an equal number of rows and assigning three workers, the task of computing $\textbf{A}_1\textbf{x}$, $\textbf{A}_2\textbf{x}$ and $\left(\textbf{A}_1+\textbf{A}_2\right)\textbf{x}$, respectively. Evidently, the load on each node is half of the original job. Furthermore, it is easy to see that $\bfA \bfx$ can be recovered as soon as any two workers complete their tasks (with some minimal post-processing). Thus, this system is resilient to one straggler. The work of \cite{yu2017polynomial}, poses the multiplication of two matrices in a form that is roughly equivalent to a Reed-Solomon code. In particular, each worker node's task (which is multiplying smaller submatrices) can be imagined as a coded symbol. As long as enough tasks are complete, the master node can recover the matrix product by polynomial interpolation.

 %a major obstacle is experienced because of some slow or faulty processors, which are called stragglers \cite{dean2013tail}. These stragglers may be quite slower than others in performing the computations and can reduce the efficiency of the whole system. To mitigate this straggler issue, several methods have been developed recently, for example \cite{wang2018fundamental}, which introduce redundancy in the worker nodes, and the master node just waits for a fastest subset of the workers to finish the job. For example, if we want to multiply a vector $\textbf{x} \in \mathbb{R}^t$ with a matrix $\textbf{A} \in \mathbb{R}^{r \times t}$, we can do it in distributed fashion rather than in a single machine. If we have three workers where each of them can store half of the whole matrix $A$, then we can split $\textbf{A}$ into two submatrices as $\textbf{A} = \left[\textbf{A}_1;\textbf{A}_2\right]$, and assign three different works of computing $\textbf{A}_1\textbf{x}$, $\textbf{A}_2\textbf{x}$ and $\left(\textbf{A}_1+\textbf{A}_2\right)\textbf{x}$, respectively. Now, the master node can decode and obtain the final result $\textbf{A}\textbf{x}$ if it gets the output from any two of those three workers. So, this system can sustain one straggler which can be any of these three workers.
For such systems we can define a so-called recovery threshold, which is defined as the minimum value of $\tau$, such that the master node can obtain the result as long as {\it any} $\tau$ workers complete their tasks. Thus, at the top level, in these systems stragglers are treated as the equivalent of erasures in coding theory, i.e., the assumption is that no useful information can be obtained from the stragglers.

While these are interesting ideas, there are certain issues that are ignored in the majority of prior work (see \cite{kiani2018exploitation, mallick2018rateless, wang2018coded} for some exceptions). Firstly, several practical cases of matrix-vector or matrix-matrix multiplication involve sparse matrices. Using MDS coding strategies in a straightforward manner will often destroy the sparsity of the matrices being processed by the worker nodes. In fact, as noted in \cite{wang2018coded}, this can cause the overall job execution time to actually go up rather than down. Secondly, in the distributed computation setting, we make the observation that it is possible to leverage partial computations that are performed by the stragglers. Thus, a slow worker may not necessarily be a useless worker.

\subsection{Main Contributions}
\begin{itemize}
\item In this work we present a more fine-grained model of the distributed matrix-vector multiplication that allows us to (i) leverage partial computations performed by stragglers and (ii) impose constraints on the extent to which coding is allowed in the solution. %This has also been considered in CITE, though our model formulation is quite different from those works.
Our formulation leads to some new questions in the domain of code design that to our best knowledge have not been investigated systematically in the literature before.
\item We present two models in our work. In the first model, the tasks assigned to the workers are uncoded, whereas in the second model we allow for a user specified fraction of coded tasks. In both cases, we present bounds on the amount of computation that the workers need to perform in the worst case and the straggler resilience of the system. We also present matching construction schemes in some cases. We emphasize that the uncoded model applies in general to {\it any} computation problem, and the bounds and constructions hold in significant generality for that case.
\end{itemize}

\begin{figure}[t]
\centering
\includegraphics[width=55mm]{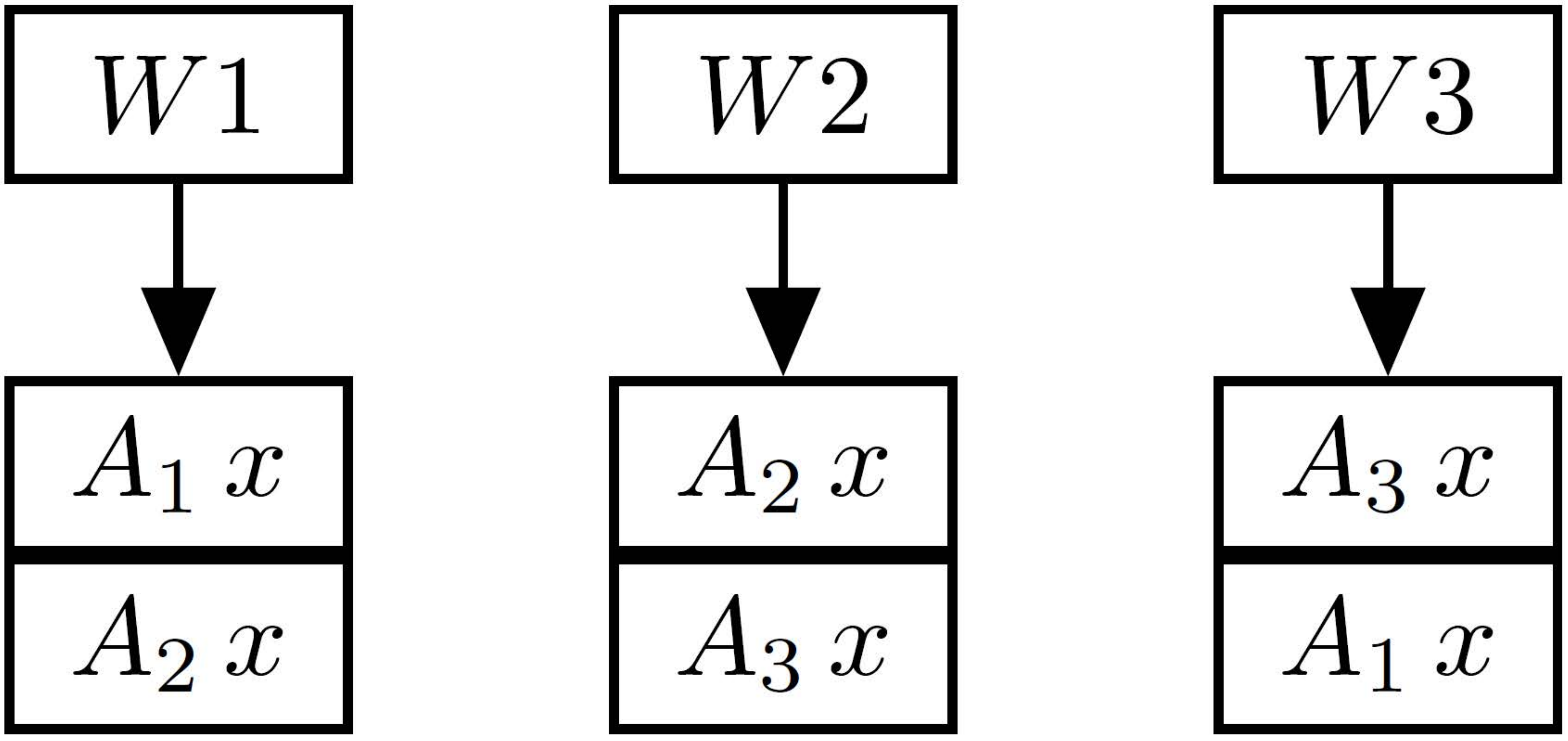}
\caption{\small Matrix $A$ is divided into three submatrices. Each worker is assigned two of the submatrices.}
\label{fig1}
\end{figure}

\section{Problem Formulation}
\label{sec:prob_form}

We consider a scenario where a master node has a matrix $\bfA$ and a vector $\bfx$ and needs to compute $\bfA \bfx$. The computation needs to be carried out in a distributed fashion over $n$ nodes. Each node receives a certain fraction (denoted by $\gamma$) of the rows of $\bfA$ and the vector $\bfx$. The node is responsible for computing the product of its assigned submatrix and $\bfx$.

%The goal of the system is to assign appropriate responsibilities to the nodes so that the overall job execution time is minimized. In much of prior work \cite{wang2018coded}, \cite{yu2018straggler},\cite{8006963},\cite{tandon2016gradient}, this has been treated as equivalent to straggler mitigation, i.e., constructions that allow the computation to take place even if a designated number of stragglers do not perform {\it any} work have been designed. In this work we aim to use the {\it partial} computation performed by the stragglers. Towards this end, we present a more fine-grained model of the actual computation. \aditya{need to cite some exceptions from ISIT 2018}

We assume that the storage fraction $\gamma$ can be expressed as $\ell/\Delta$ where both $\ell$ and $\Delta$ are integers. In this work, we assume that $\bfA$ is large enough so that we can choose any large enough value of $\Delta$. Following this, we partition the rows of $\bfA$ into $\Delta$ submatrices denoted $\bfA_1, \dots, \bfA_\Delta$; we will also refer to these as the blocks of $\bfA$. Each node is assigned the equivalent of $\ell$ block rows. The assigned block rows can simply be subsets of $\{\bfA_1, \dots, \bfA_\Delta\}$; in this case we call the solution ``uncoded". Alternatively, the assigned block rows can be suitably chosen functions of $\{\bfA_1, \dots, \bfA_\Delta\}$; in this case we call the solution ``coded". Each worker node processes its assigned block rows sequentially from the top to the bottom. In particular, if a node is currently processing the $i$-the block row ($1 \leq i \leq l$), then it has already processed blocks $1$ through $i-1$. As we shall show, the processing order matters in this problem.

We assume that each time a node computes the block product (with $\bfx$) it transmits the result to the master node. We enforce the requirement that the master node should be able to recover $\bfA \bfx$ as long it receives {\it any} $Q$ block products from the worker nodes. This formulation subsumes treating stragglers as non-working nodes. Indeed, suppose that we want a system that is resilient to $s$ stragglers. Then, a sufficient condition would be that $Q \leq (n-s) \ell$ in our system.

%\anindya{In the last sentence, actually, Q does not have to be less than $(n-s) \ell$. For straggler resilience, this is a sufficient condition. Consider example 2 where $Q = 10$ but $(n-s)\ell = 9$}

\begin{example}
\label{eg:initial_example}
Consider a system with $n=3$ worker nodes with $\gamma = 2/3$. We partition $\bfA$ into $\Delta=3$ row blocks and the assignment of blocks to each node is shown in Fig. \ref{fig1} (this is an uncoded solution). We emphasize that the order of the computation also matters here, i.e., worker node $1$ (for example) computes $\bfA_1 \bfx$ first and then $\bfA_2 \bfx$. For the specific assignment it is clear that the computation is successful as long as any four block products are returned. Thus, for this system $Q = 4$.

On the other hand, Fig. \ref{fig2} demonstrates a coded solution, where the assignment in the second block rows of the workers are some suitably chosen functions of the elements of $\{\bfA_1 \bfx, \bfA_2 \bfx, \bfA_3 \bfx\}$. For this assignment, it is obvious that the master can recover $\bfA \bfx$ as long as any three block products are returned by the workers, so in this system $Q = 3$.

\begin{figure}[t]
\centering
\includegraphics[width=75mm]{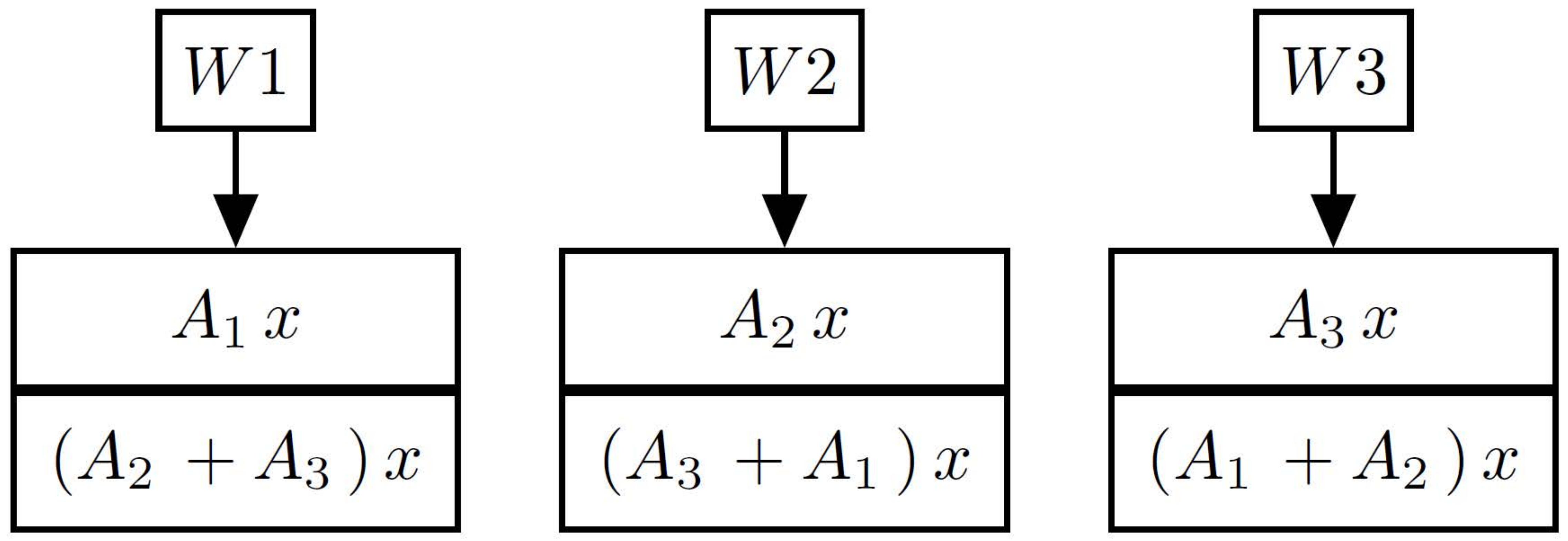}
\caption{\small Matrix $A$ is divided into three submatrices. Each worker is assigned two submatrices one of which is coded. }
\label{fig2}
\end{figure}

\end{example}
For any time $t$, we let $w_i(t)$ represent the state of computation of the $i$-th worker node, i.e., $w_i(t)$ is a positive integer between $0$ and $\ell$ which represents the number of block rows that have been processed by worker node $i$. Thus, our system requirement states as long as $\sum_{i=1}^n w_i(t) \geq Q$, the master node should be able to determine $\bfA \bfx$. As $\Delta$ is a parameter that can be chosen, our objective is to minimize the value of $Q/\Delta$ for such a system. This formulation minimizes the overall computation performed by the worker nodes.

%\anindya{In the first sentence of remark 1, I think we should not write `the uncoded formulation'. It can be `the formulation above'}

\begin{remark}
It is important to note that the uncoded formulation applies to {\it any} computation job that can be subdivided into $\Delta$ tasks. In particular, the structure of matrix-vector multiplication is not important in this context. Thus, the discussion in the subsequent sections for the uncoded setup applies in significantly more generality.
\end{remark}

\section{Uncoded Scheme}
\label{sec:uncoded_scheme}
%In this Section, we explore an uncoded solution of the problem described in Section \ref{sec:prob_form} and provide an optimal scheme which minimizes the computations of the workers.
The advantage of an uncoded scheme is that it does not require any computation from the master to recover $\bfA \bfx$ because the assignment to any worker is simply a subset of $\{\bfA_1 \bfx, \bfA_2 \bfx, \dots, \bfA_\Delta \bfx \}$. %For example, in Fig. \ref{fig1}, if the master gets the results from workers $1$ and $2$, then it can just concatenate $\bfA_1 \bfx$, $\bfA_2 \bfx$ and $\bfA_3 \bfx$ to obtain $\bfA \bfx$. Next, we state theorem $1$ which provides a lower bound on the memory requirement of the worker nodes to be resilient to a particular number of stragglers in an uncoded scenario. After that theorem $2$ gives a lower bound on $Q$ which is the minimum number of block-products that the master needs from the workers for the guarantee to recover $\bfA \bfx$. Finally, we propose an uncoded scheme and prove in theorem $3$ that the scheme meets both of the bounds in theorems $1$ and $2$.

%\anindya{In the sentence below, should we write `each worker node contains at most one copy' ?}

To avoid trivialities, we emphasize that each worker node only contains at most one copy of each block row. In what follows, we use $r$ to represent the replication factor of each $\bfA_i$. Thus, each $\bfA_i$ appears $r$ times across all worker nodes. We use the notation $\langle n,\ell,\Delta,r \rangle$-uncoded system to represent an uncoded system with the corresponding parameters.
\begin{theorem}
\label{thm:straggler_resilience}
Consider an $\langle n,\ell,\Delta,r \rangle$-uncoded system. If the system needs to be resilient to $s$ stragglers, then $r \geq s+1$ and $n \gamma = r$.
\end{theorem}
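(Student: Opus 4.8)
The plan is to prove the two assertions separately, starting by pinning down what straggler resilience means for an uncoded system. Since the assignment to each worker is a subset of $\{\bfA_1\bfx, \dots, \bfA_\Delta\bfx\}$ and there is no coding, the master can reconstruct $\bfA\bfx$ only if it has seen \emph{every} distinct block product $\bfA_i\bfx$, $1 \le i \le \Delta$. Consequently, resilience to $s$ stragglers is equivalent to the combinatorial covering condition: for every choice of $s$ worker nodes that contribute nothing, the blocks assigned to the remaining $n-s$ workers must together include all $\Delta$ blocks. I would make this reformulation the first step, since the rest of the argument is purely about the incidence structure between workers and blocks.

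For the bound $r \ge s+1$ I would argue by contrapositive. Fix any block $\bfA_i$; by definition it is held by exactly $r$ workers. Suppose $r \le s$. Then I can form a straggler set $S$ of size $s$ that contains all $r$ workers holding $\bfA_i$, padding with $s-r$ arbitrary other workers. Under this straggler pattern none of the surviving $n-s$ workers holds $\bfA_i$, so $\bfA_i\bfx$ can never be recovered, violating the covering condition and hence contradicting $s$-straggler resilience. Therefore $r \ge s+1$. The only thing to verify carefully here is that such a set $S$ can always be completed to size exactly $s$, which holds because $r \le s \le n$.

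The equality $n\gamma = r$ follows from a double-counting argument on the set of worker--block incidence pairs $\{(j,i) : \text{worker } j \text{ is assigned block } \bfA_i\}$. Counting by workers gives $n\ell$, since each of the $n$ workers is assigned exactly $\ell$ blocks; counting by blocks gives $\Delta r$, since each of the $\Delta$ blocks is assigned to exactly $r$ workers. Equating the two counts yields $n\ell = \Delta r$, and dividing by $\Delta$ together with $\gamma = \ell/\Delta$ gives $r = n\ell/\Delta = n\gamma$.

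I expect the main obstacle to be conceptual rather than computational: correctly translating the dynamic ``resilient to $s$ stragglers'' requirement into the static covering condition on block assignments, and in particular justifying that the worst case is exactly the one where all copies of a single block land among the stragglers. Once that translation is in place, the lower bound on $r$ is an immediate pigeonhole-style observation and the identity $n\gamma = r$ is routine counting.
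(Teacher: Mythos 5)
Your proposal is correct and follows essentially the same route as the paper: the lower bound $r \ge s+1$ comes from the observation that all copies of a block could land among the stragglers (the paper states this directly; you spell it out as a contrapositive), and the identity $n\gamma = r$ comes from the same double count $n\ell = \Delta r$ of worker--block incidences. The extra care you take in formalizing resilience as a covering condition is sound but not a different argument.
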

\begin{proof}%[\textbf{Proof of Theorem 1}]
It is evident that $r \geq s+1$ since we need at least one copy of each block row to be present even in the presence of $s$ stragglers. Next, the total number of symbols across all nodes equals $\Delta r$. A simple double counting argument, yields the equality $n \ell = \Delta r$ which further implies that $n \gamma = r$ by the definition of $\gamma$.
%We know the rows of matrix $\bfA$ is partitioned into $\Delta$ block rows, and each of the worker nodes is assigned the equivalent of $\ell$ block rows, so $\gamma = \frac{\ell}{\Delta}$. Now, if the system needs to be resilient to $s$ stragglers, then we need to have at least $(s+1)$ copies of all blocks of $\bfA_j$'s in $(s+1)$ different workers, for $j = 1, 2, 3, \dots, \Delta$. Thus, the system will require at least $\Delta (s+1)$ blocks, which need to be allocated in the $\ell$ blocks of $n$ workers. So, we can say $n \ell \geq \Delta (s+1)$, which leads to $n \gamma \geq (s+1)$. \qedhere
\end{proof}
%It can be inferred from theorem 1 that if $n \gamma \geq r$ where $r$ is a positive integer, then in an optimal uncoded scheme, the system can be resilient to any $s \leq (r-1)$ stragglers.
%\begin{theorem}
%If the workers of a system have such storage capacity so that they can accommodate exactly $r$ copies of each block of $\bfA_j$'s, for $j = 1, 2, 3, \dots, \Delta$; then $Q \geq \Delta\, r - \frac{r}{2}\, (l+1) + 1$.
%\end{theorem}

\begin{figure}[t]
\centering
\includegraphics[width=85mm]{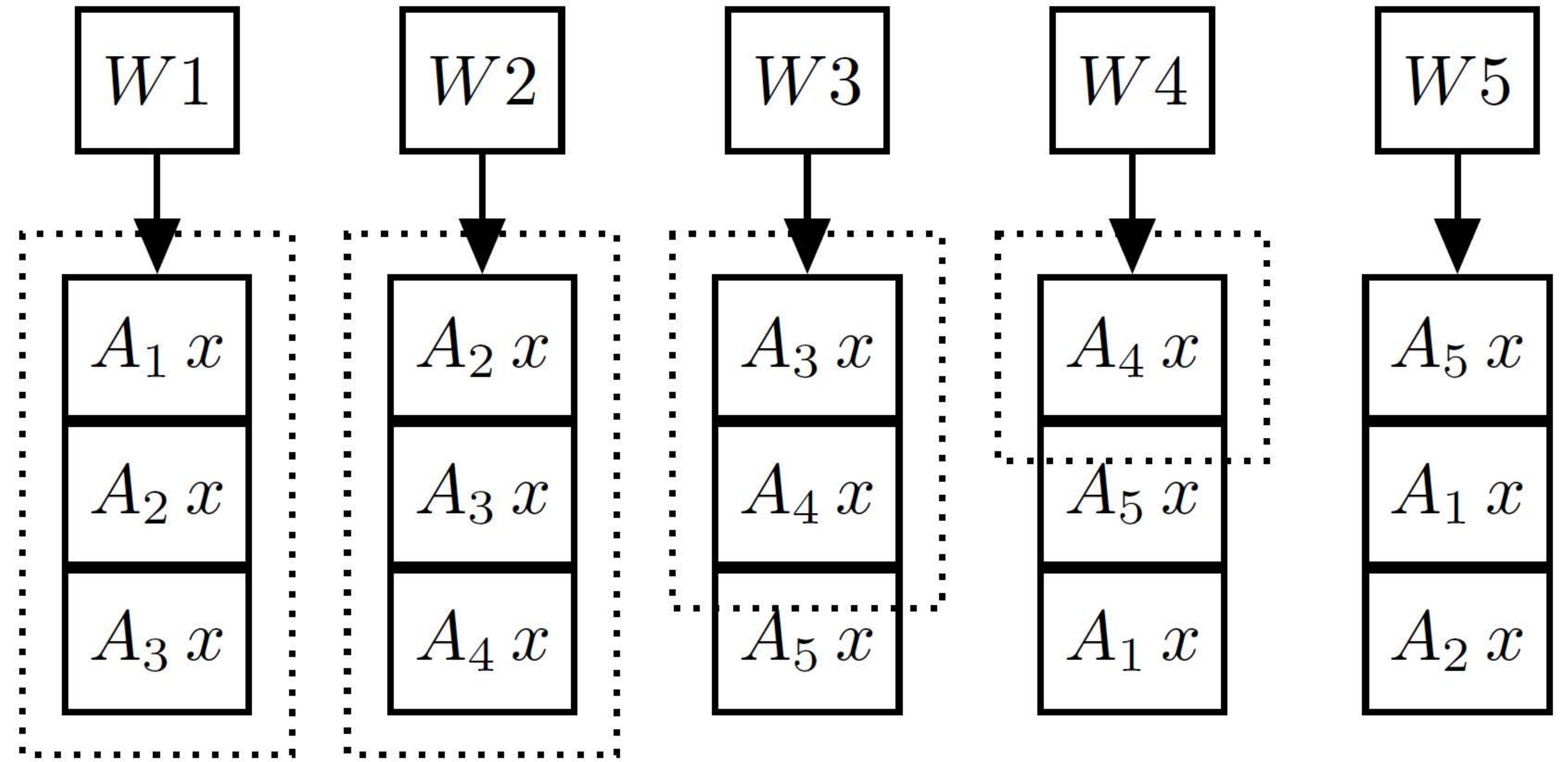}
\caption{\small A $\langle n, \ell, \Delta,r \rangle = \langle 5, 3, 5, 3\rangle$-uncoded system designed using Algorithm \ref{Alg:Cyclic_Uncoded}. The dotted blocks show the blocks that can be processed in the worst case without processing $A_5$.}
\label{fig3}
\end{figure}

\begin{theorem}
\label{thm:Q_by_Delta_bound}
Consider an $\langle n,\ell,\Delta,r \rangle$-uncoded system. Then, $Q \geq \max(\Delta, \Delta r - \frac{r}{2}\, (l+1) + 1)$.
\end{theorem}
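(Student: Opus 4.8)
The plan is to characterize $Q$ through the worst \emph{failing} configuration. Since the master can reconstruct $\bfA\bfx$ only after receiving all $\Delta$ distinct block products $\bfA_1\bfx,\dots,\bfA_\Delta\bfx$, a state $(w_1(t),\dots,w_n(t))$ fails precisely when the blocks processed so far miss at least one $\bfA_j$. Writing $W^\star$ for the largest value of $\sum_i w_i(t)$ achievable by a failing state, the system requirement forces $Q \geq W^\star + 1$: a total of $W^\star$ returned products can leave some block undetermined, so the guarantee cannot hold at that level. I would therefore lower bound $Q$ by exhibiting failing states of large total work. The term $Q \geq \Delta$ is immediate, since each returned product reveals a single $\bfA_j\bfx$, so fewer than $\Delta$ products cover at most $\Delta-1$ distinct blocks and cannot determine $\bfA\bfx$.

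For the second term I would construct a failing state by fixing one block index $b$ and letting every worker compute as much as possible \emph{without} completing $\bfA_b$. Because each worker reads its $\ell$ assigned blocks from top to bottom, a worker not holding $\bfA_b$ can finish all $\ell$ of its blocks, while a worker that holds $\bfA_b$ at position $p_i$ can process at most $p_i-1$ blocks before reaching it. Letting $\calI_b$ denote the set of workers holding $\bfA_b$, so that $|\calI_b| = r$ by the replication assumption, this yields
\[
W^\star \;\geq\; W_b \;:=\; (n-r)\ell + \sum_{i \in \calI_b} (p_i - 1),
\]
a bound valid for every choice of $b$.

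The key step is then to show that \emph{some} block $b$ admits a large $\sum_{i\in\calI_b} p_i$, which I would establish by an averaging (double-counting) argument rather than by controlling all blocks at once. Within each worker the positions occupied by its $\ell$ blocks are exactly $1,2,\dots,\ell$, so summing positions over all incidences gives $\sum_{b}\sum_{i\in\calI_b} p_i = n\,\ell(\ell+1)/2$. Dividing by the number of blocks $\Delta$ and substituting $n\ell = \Delta r$ from Theorem~\ref{thm:straggler_resilience}, the average of $\sum_{i\in\calI_b} p_i$ over the $\Delta$ blocks equals $r(\ell+1)/2$, so there exists a block $b$ with $\sum_{i\in\calI_b} p_i \geq r(\ell+1)/2$.

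Plugging this $b$ into $W_b$ and simplifying $(n-r)\ell - r + r(\ell+1)/2$ using $n\ell = \Delta r$ collapses the expression to $\Delta r - \frac{r}{2}(\ell+1)$, whence $Q \geq W^\star + 1 \geq \Delta r - \frac{r}{2}(\ell+1) + 1$; combined with $Q \geq \Delta$ this gives the stated maximum. I expect the main obstacle to be the averaging step: one must recognize that the positions inside each worker form the full set $\{1,\dots,\ell\}$ (which is what makes the double count exact), and then argue via pigeonhole that a single well-chosen block suffices, so that the adversary does not need to coordinate deletions across all blocks simultaneously.
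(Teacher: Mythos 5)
Your proposal is correct and follows essentially the same route as the paper: both arguments identify, for each block $\bfA_j$, the worst-case failing state (workers not holding $\bfA_j$ finish everything; workers holding it stop just before its position), and then apply a double-counting/averaging argument over workers together with max-exceeds-average to extract one block achieving the bound. The only cosmetic difference is bookkeeping—the paper averages the quantities $Q_j$ directly, while you average only the position sums $\sum_{i\in\calI_b} p_i$ and treat the constant term $(n-r)\ell - r$ separately—which yields an identical computation once $n\ell = \Delta r$ is substituted.
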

\begin{proof}%[\textbf{Proof of Theorem 2}]
For the system under consideration the master node requires each block product, $\bfA_j \bfx$ where $j = 1 , 2 , ... , \Delta$ to be computed at least once by the $n$ worker nodes. It is evident that the system needs to process at least $\Delta$ blocks, so that $Q \geq \Delta$.
Let $Q_j$ represent the maximum number of block rows that are processed in the worst case without obtaining $\bfA_j \bfx$ (see Fig. \ref{fig3} for an example). It is evident in this case that $Q = \max_{j=1, \dots, \Delta} Q_j + 1$.

Our strategy is to calculate the average $\overline{Q} = \sum\limits_{j=1}^{\Delta} Q_j/\Delta$ and use the simple bound $Q \geq \overline{Q} + 1$. Toward this end, note that for any uncoded solution, we can calculate $\sum_{j=1}^\Delta Q_j$ in a different way. For a worker $i$, there are $\ell$ assigned block rows and $\Delta - \ell$ do not appear in it. Thus, in the calculation of $\sum_{j=1}^\Delta Q_j$, worker node $i$ contributes

$$ \sum\limits_{k=1}^{\ell} (k-1) + (\Delta - \ell) \ell,$$
which is clearly independent of $i$. Therefore,

\begin{align*}
\label{eq3}
\overline{Q} &= n\left[ \frac{\sum\limits_{k=1}^{\ell} (k-1) + (\Delta - \ell) \ell }{\Delta} \right]\\
&=  n \ell - \frac{n \ell}{2 \Delta} (\ell+1) \; = \;  \Delta r - \frac{r}{2} (\ell+1),
\end{align*}
where we used $n \ell = \Delta r$ in the last step above. %Thus, we have the desired result.
\end{proof}

%\anindya{In the next sentence, and in theorem 3, I think we should not say about the bound on $Q / \Delta$. We should say about the bound on $Q$}

Thus, Theorem \ref{thm:straggler_resilience} gives an upper bound on the number of stragglers that a system tolerates and Theorem \ref{thm:Q_by_Delta_bound} provides a lower bound on $Q$. Both these results can be treated as benchmarks for an uncoded scheme. We now propose a construction (see Algorithm \ref{Alg:Cyclic_Uncoded}) which meets both these bounds. The basic idea in Algorithm \ref{Alg:Cyclic_Uncoded} is to set $\Delta = n$ and place the block rows in a cyclic fashion (see Fig. \ref{fig3}).

\begin{algorithm}[t]
	\caption{Cyclic Uncoded Scheme}
	\label{Alg:Cyclic_Uncoded}
   \SetKwInOut{Input}{Input}
   \SetKwInOut{Output}{Output}
   \Input{Matrix $\bfA$ and vector $\bfx$, $n$-number of worker nodes, replication factor $r$.}
   Set $\Delta = n$ and $\ell = r$. Partition $\bfA$ into $\Delta$ block rows $\bfA_1, \dots, \bfA_\Delta$\;
   \For{$i\gets 1$ \KwTo $n$}{
   Assign $\bfA_{i}, \bfA_{i+1}, \dots, \bfA_{i + \ell -1}$ from top to bottom (subscripts reduced modulo $\Delta$) to worker node $i$
   }
   \Output{$\langle n,\ell,\Delta,r \rangle$ uncoded system with stragger resilience $r-1$ and optimal $Q/\Delta$.}
\end{algorithm}

\begin{theorem}
\label{thm:cyclic_uncoded_scheme}
The Cyclic Uncoded Scheme ({\it cf.} Algorithm \ref{Alg:Cyclic_Uncoded}) is resilient to $(r-1)$ stragglers and meets the bound in Theorem \ref{thm:Q_by_Delta_bound}.
\end{theorem}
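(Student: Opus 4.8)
The plan is to establish the two assertions separately: first the straggler resilience, then the optimality of $Q/\Delta$.

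For the straggler resilience, I would first observe that with $\Delta = n$ and $\ell = r$, the cyclic placement makes each block row $\bfA_j$ appear in exactly the $r$ consecutive worker nodes indexed $j-r+1, j-r+2, \dots, j$ (modulo $\Delta$). Hence every block is replicated exactly $r$ times, consistent with Theorem \ref{thm:straggler_resilience}. Since any set of $s = r-1$ straggling nodes can delete at most $r-1$ of the $r$ copies of any given $\bfA_j$, at least one copy of every block survives among the non-straggling nodes. Therefore, once the remaining $n-(r-1)$ workers finish, the master collects all $\Delta$ distinct block products, and since recovery of $\bfA\bfx$ in the uncoded setting only requires each $\bfA_j\bfx$ to be received at least once, resilience to $r-1$ stragglers follows.

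For the optimality, I would exploit the cyclic symmetry of the construction. Recall from the proof of Theorem \ref{thm:Q_by_Delta_bound} that $Q = \max_j Q_j + 1$, where $Q_j$ is the largest number of block rows that can be processed in the worst case without producing $\bfA_j\bfx$. Consider the relabeling $\sigma$ that simultaneously shifts worker index $i \mapsto i+1$ and block index $k \mapsto k+1$ (all modulo $\Delta$). Because worker $i$ is assigned exactly $\{\bfA_i,\dots,\bfA_{i+\ell-1}\}$ in top-to-bottom order, $\sigma$ maps this assignment onto that of worker $i+1$ while preserving the processing order; thus $\sigma$ is an automorphism of the block-assignment configuration that sends the target block $\bfA_j$ to $\bfA_{j+1}$. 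It follows that $Q_j = Q_{j+1}$ for every $j$, so all $Q_j$ are equal and hence $\max_j Q_j = \overline{Q}$, the average already computed in Theorem \ref{thm:Q_by_Delta_bound}. This gives
\[
Q = \overline{Q} + 1 = \Delta r - \frac{r}{2}(\ell+1) + 1,
\]
matching the second term of the lower bound. As a fallback, one can instead compute $Q_j$ directly, noting that the $r$ workers containing $\bfA_j$ place it at positions $1,2,\dots,r$ (contributing $0+1+\cdots+(r-1)$) while each of the remaining $\Delta-\ell$ workers contributes its full $\ell$, which reproduces the same expression.

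It remains to confirm that this value equals $\max\bigl(\Delta,\ \Delta r - \frac{r}{2}(\ell+1)+1\bigr)$, i.e.\ that the second term dominates. Substituting $\ell = r$ and $\Delta = n \geq r$, a short calculation reduces $\Delta r - \frac{r}{2}(r+1)+1 \geq \Delta$ to $\Delta(r-1) \geq \frac{(r+2)(r-1)}{2}$, which holds for all $r \geq 1$ since $\Delta \geq r \geq \frac{r+2}{2}$ when $r \geq 2$ and both sides vanish when $r=1$. Hence the cyclic scheme attains the bound of Theorem \ref{thm:Q_by_Delta_bound} exactly. I expect the main subtlety to be the symmetry argument that equalizes the $Q_j$: one must verify carefully that $\sigma$ genuinely preserves the processing order and not merely the set of assigned blocks, since the ordering is exactly what governs the worst-case count.
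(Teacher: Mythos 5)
Your proposal is correct, and its main line of argument differs from the paper's in an instructive way. The paper proves the optimality claim by directly computing $Q_j$ for the cyclic construction: the $r$ workers containing $\bfA_j$ hold it at positions $1,2,\dots,r$ and thus contribute $0+1+\cdots+(\ell-1)$ processed blocks, while the remaining $n-r$ workers contribute $\ell$ each, giving $Q_j = n\ell - \frac{\ell}{2}(\ell+1)$ independently of $j$ --- this is exactly your ``fallback'' computation. Your primary route instead observes that the simultaneous cyclic shift of worker and block indices is an order-preserving automorphism of the assignment, so $Q_j = Q_{j+1}$ for all $j$; equality of all the $Q_j$ then forces $\max_j Q_j = \overline{Q}$, the average already evaluated in the proof of Theorem \ref{thm:Q_by_Delta_bound}, and $Q = \overline{Q}+1$ follows with no new counting. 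This symmetry argument is a genuine (and slightly cleaner) alternative: it reuses the averaging computation rather than redoing it, and it isolates the one point that actually needs care, namely that the automorphism preserves the top-to-bottom processing order and not just the block sets. Your proposal also adds something the paper passes over silently: the bound in Theorem \ref{thm:Q_by_Delta_bound} is a maximum of two terms, and you verify that $\Delta r - \frac{r}{2}(\ell+1)+1 \geq \Delta$ under $\ell = r$ and $\Delta = n \geq r$ (via $\Delta(r-1) \geq \frac{(r+2)(r-1)}{2}$), which is needed for the literal claim that the construction ``meets the bound''; that algebra is correct. The straggler-resilience half of your argument coincides with the paper's.
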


%\begin{theorem}
%The construction scheme $1$ is resilient to $(r-1)$ stragglers and achieves the lower bound on block product computations as $Q = \Delta\, r - \frac{r}{2}\, (\ell+1) + 1$.
%\end{theorem}
\begin{proof}%[\textbf{Proof of Theorem 3}]
%We know that each of the worker nodes can store $\gamma$ fraction of matrix $\bfA$, which indicates that $\gamma = \frac{\ell}{\Delta} \leq 1$, thus in the proposed scheme, $r \leq n$. Now, because of assigning the jobs to workers in a cyclic fashion, $r$ copies of $A_j$'s would have been assigned in $r$ different workers, where $j = 1, 2, 3, \dots, \Delta$. It indicates that we will have at least another copy from each of $A_j$'s if we have $(r-1)$ stragglers. That one copy from each of $A_j$'s would be enough to finish our job to compute $\bfA \bfx$, and that proves the resilience to $(r-1)$ stragglers.

%\anindya{In the next sentence, should we mention that `each block row appears $r$ times in $r$ different workers across $n$ worker nodes' ?}\\
It is evident from the cyclic nature of the construction that each block row appears $r$ times in $r$ different worker nodes. Thus, the scheme is resilient to $r-1$ stragglers. Next, we show that $Q_j$ (defined in the proof of Theorem \ref{thm:Q_by_Delta_bound}) in this construction is the same for all block rows $\bfA_j$, $j =1, \dots, \Delta$. To see this we note that the maximum number of block rows that can be processed without processing $\bfA_j$ can be calculated as follows.

\begin{itemize}
\item We can process the $(i-1)$ rows in worker node $[(j-i) \mod n] + 1$ for $i = 1, \dots, r$. %These can be processed without processing a copy of $\bfA_j$.
\item All the block rows in other $n-r$ workers can be processed as well.
\end{itemize}

Thus, we have
%Next, we are going to prove whether this assignment of jobs in a cyclic fashion can achieve the lower bound of $Q$ as proved in theorem 2. Since each $\bfA_j$ appears in the system in the similar fashion, we can consider any $A_j$, for example, $\bfA_j = \bfA_{\Delta}$ and find out corresponding $Q_j$ which indicates the maximum number of block-products that the master requires from the workers to recover $\bfA_j \bfx$. For example, in Fig. \ref{fig3}, when $n = \Delta = 5$ and $\ell = r = 3 $, if the master gets access to the dotted nine block-products, then the master is still missing $\bfA_5 \bfx$, so $Q_5 = 9 + 1 = 10$. Thus, in general, the master can access first $k$ elements from the $(n-k)^{th}$ worker, where $k = 1 , 2 , ... , (n-\ell)$, and all $\ell$ elements from other $(n - \ell)$ workers, but still miss $\bfA_n \bfx$, so we can write
\begin{align*}
Q_j & = \left[ 1 + 2 + 3 + ... + (\ell - 1) \right] + ( n - r ) \ell \\
& = \frac{(\ell-1)\ell}{2} + n \ell - \ell^2  \text{~~~(as $r = \ell$)}\\
& = n \ell - \frac{\ell}{2}( \ell + 1 ).
\end{align*} Again, using the fact that $n = \Delta$ and $\ell = r$, we have
\begin{align*}
Q_j = \Delta r - \frac{r}{2}( \ell + 1 ).
\end{align*}
Thus, $Q_j$ is independent of $j$ and therefore $Q = \Delta r - \frac{r}{2}( \ell + 1 ) + 1$.
%And, in this scheme of assigning the jobs in a cyclic fashion, the value is same for any $Q_j$ where $j = 1, 2, 3, \dots, \Delta$, so $Q = \overline{Q} = Q_n$, which meets the lower bound as provided in theorem 2.
\end{proof}

%\noindent From the example in Fig. \ref{fig1}, the system has three workers and each of the worker nodes can store two-thirds of the whole matrix $\bfA$, so $n=3$ and $\gamma = \frac{2}{3} = \frac{\ell}{\Delta}$. Therefore, we set $\Delta=3$ and $\ell = r = 2$ and assign the jobs to the workers in a cyclic fashion as we proposed. And, it is clear from the figure that the scheme is resilient to one straggler and the optimum $Q = 4$ which are exactly same as the bounds obtained from theorems 1 and 2, respectively.
\begin{example}
\label{eg:uncoded_cons}
As an example, consider Fig. \ref{fig3}, where we have  $\Delta = n = 5$ and $\ell = r = 3$. The scheme is resilient to $(r - 1) = 2$ stragglers and it can be verified that $\bfA \bfx$ can be computed once any $Q=10$ block rows have been processed. %$Q = 5 \times 3 - 3 \times 2 + 1 = 10$,  so to recover $\bfA \bfx$, the master needs the result from any $10$ block-products from the workers in the assigned order of jobs.
\end{example}
\section{Coded Scheme}
We now explore coded schemes in our setting. As demonstrated in Example \ref{eg:initial_example}, the value of $Q$ in the coded scenario can be strictly lesser than in the uncoded case. A simple solution to this problem is to use MDS (maximum distance separable) codes as was done in some of the inital papers \cite{lee2018speeding} in this area. Namely, one could use an $(n\ell,\Delta)$-MDS code for some value of $\Delta$ and $n \ell \geq \Delta$. The block rows $\bfA_1, \dots, \bfA_\Delta$ will be combined using the corresponding generator matrix (with an appropriate mapping from the finite field to the real field). It is clear that in this case, the master node can compute $\bfA \bfx$ as long as any $\Delta$ rows are processed. However, such codes may require rather dense linear combination of the rows of $\bfA$ and this may incur a significant overhead in the computation performed by the worker nodes, especially in the practical case when $\bfA$ is sparse to begin with.

Accordingly, in this part of the work we are interested in examining schemes where we can specify the fraction of coded blocks.
We now assume that each node receives a $\gamma = \gamma_u + \gamma_c$ fraction of the rows of $\bfA$, where $\gamma_u$ and $\gamma_c$ correspond to the uncoded and coded parts respectively.
%Now, we assume that for a particular node, among the total storage capacity $\gamma$, we want to spend $\gamma_u$ fraction for the uncoded job assignments and $\gamma_c$ fraction for the coded ones. Thus
The replication factor for the uncoded portion is $r_u$ and the number of uncoded block assignments in a worker is denoted as $\ell_u = \Delta \gamma_u$. This implies that $ n \ell_u = \Delta r_u$. We let $\ell_c = \Delta \gamma_c$ represent the number of coded blocks in each worker.
%\litang{We could mention in this part we assume each worker has same number of coded and uncoded assignments} %^So according to theorem \ref{thm:straggler_resilience}, we can say $n \gamma_u = r_u$, and thus $ n \ell_u = \Delta r_u$. It should be noted that if $\gamma_u = 0$, then all the assigned jobs are coded, and we can follow an $(n \ell , \Delta)$ MDS code which would lead us to $ Q = \Delta$.

%\aditya{revise as necessary} In the case when there is a combination of coded and uncoded rows, we demonstrate that the order in which the blocks are arranged within blocks, tends to affect the $Q/\Delta$ ratio even more.
%\anindya{Where should we put \cite{kiani2018exploitation} for partial calculation, we may add \cite{tandon2017gradient} too, which also discusses about partial stragglers. }

For our bounds we assume that each processed coded block is useful to the master node. This can be ensured by our choice of coding coefficients that are obtained from Cauchy matrices of appropriate dimension. In what follows, we consider two different schemes. In one case the coded blocks appear at the bottom of each node while in the other case the coded blocks appear at the top. The first case leans towards easier decoding by the master node, whereas the second one aims to minimize the computation performed by the worker nodes. We use the notation $\langle n, \ell_u, \ell_c, \Delta, r_u \rangle$-bottom and $\langle n, \ell_u, \ell_c, \Delta, r_u \rangle$-top to refer to the corresponding systems. The value of $Q$ for these systems will be denoted by $Q_{cb}$ and $Q_{ct}$ respectively (the subscripts are self-explanatory).%\litang{We could emphasis the order is from top to bottom. So the first case will read uncoded assignment firstly and second case will read coded assignment firstly}

\subsection{Coded blocks at the bottom}
\label{sec:coded_blocks_bottom}
In this case, the results of Section \ref{sec:uncoded_scheme} (See Theorem \ref{thm:Q_by_Delta_bound}) immediately imply that $Q_{cb} \geq \max (\Delta, \Delta r_u - \frac{r_u}{2} (\ell + 1) +1)$ (the subscript denotes that the coded blocks appear at the bottom).  This follows by applying the previous arguments to the uncoded part of the solution. A construction that meets these bounds is outlined in Algorithm \ref{Alg:Cyclic_Partially_Uncoded}.
The algorithm uses a Cauchy matrix of dimension $n \ell_c \times \Delta$.% in Algorithm \ref{Alg:Cyclic_Partially_Uncoded}.

%Now we consider different cases for $\gamma_u > 0$. First, we consider the case, where the coded blocks would be assigned at the end, and we want to calculate $Q_e$. From theorem \ref{thm:Q_coded} and its proof, it is obvious that any $Q_j$, for $j = 1, 2, 3, \dots, \Delta$, can be maximum if the master receives maximum number of uncoded blocks without getting access to any $\bfA_j \bfx$. Thus according to Theorem \ref{thm:Q_by_Delta_bound}, we can say
%\begin{align}
%Q_e \geq \Delta r_u - \frac{r_u}{2} (\ell + 1) +1
%\end{align} Now we propose a construction (see Algorithm \ref{Alg:Cyclic_Partially_Uncoded}) which meets these bound for $Q_e$. The main idea in Algorithm \ref{Alg:Cyclic_Partially_Uncoded} is similar to Algorithm \ref{Alg:Cyclic_Uncoded} except following codes in the last block rows of the workers.  We can utilize a Cauchy matrix \cite{lynch2015cauchy} $\mathcal{C} \in \mathbb{R}^{n (\ell - \ell_u) \times \Delta}$, which indicates that every $m \times m$ sub-matrix of $\mathcal{C}$ is invertible if $m \leq \Delta$, and thus any $\Delta$ rows are linearly independent.  We can choose different rows of matrix $\mathcal{C}$ as the coefficients of the linear combinations of $\bfA_1 \bfx, \bfA_2 \bfx, \dots, \bfA_\Delta \bfx$ in different coded blocks.

\begin{algorithm}[t]
	\caption{Cyclic Coded at Bottom Scheme}
	\label{Alg:Cyclic_Partially_Uncoded}
   \SetKwInOut{Input}{Input}
   \SetKwInOut{Output}{Output}
   \Input{Matrix $\bfA$ and vector $\bfx$, $n$-number of worker nodes, total storage capacity fraction $\gamma$, replication factor for uncoded portion $r_u$.}
   Set $\Delta = n$, $\ell_u = r_u$, $\ell = \gamma \Delta$, $\ell_c = \ell - \ell_u$. Determine a Cauchy matrix $\calC$ of dimension $n \ell_c \times \Delta$\;
   Partition $\bfA$ into $\Delta$ block rows $\bfA_1, \dots, \bfA_\Delta$\;
   \For{$i\gets 1$ \KwTo $n$}{
   Assign $\bfA_{i}, \bfA_{i+1}, \dots, \bfA_{i + \ell_u -1}$ from top to bottom (subscripts reduced modulo $\Delta$) to worker node $i$\;
    \For{$j \gets 1$  \KwTo $\ell_c$}{
        Define $T = \{i, i+1, \dots, i + \ell_u - 1\} \mod \Delta$\;
        Pick a row $\bfc$ of $\calC$  and assign coded block $\sum_{k=1}^\Delta \bfc_k \mathds{1}_{k \notin T} \bfA_k$\;
        Remove $\bfc$ from $\calC$, so $\mathcal{C} \gets \mathcal{C} \backslash \{\bfc\}$ \;
    }
   }
   \Output{$\langle n,\ell_u,\ell_c,\Delta,r_u \rangle$-bottom system.}
\end{algorithm}

\begin{theorem}
\label{thm:Q_partially_coded_end}
The scheme in Algorithm \ref{Alg:Cyclic_Partially_Uncoded} satisfies $Q_{cb} = \max (\Delta, \Delta r_u - \frac{r_u}{2} (\ell_u + 1) +1)$. Furthermore, it is resilient to  $\floor[\bigg]{\frac{n^2 \gamma_c + n \gamma_u - 1}{n \gamma_c + 1}}$ stragglers.
\end{theorem}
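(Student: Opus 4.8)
The linchpin of both claims is a clean characterization of when the master can decode, which I would establish first. For \emph{any} state $(w_1,\dots,w_n)$, let $D\subseteq\{1,\dots,\Delta\}$ be the set of indices $k$ for which some worker has processed the uncoded block $\bfA_k$, and let $m$ be the total number of coded blocks processed. The plan is to show that $\bfA\bfx$ is recoverable if and only if $m+|D|\ge\Delta$. The ``only if'' direction is immediate, since the processed uncoded blocks contribute only the $|D|$ coordinate directions $\{\mathds{1}_k:k\in D\}$, and the coded blocks contribute at most $m$ further directions. For ``if'' I would exploit the defining feature of the construction, namely that coded blocks sit at the \emph{bottom} of each worker: if worker $i$ has processed even one coded block, it has already processed its entire uncoded window $T_i=\{i,\dots,i+\ell_u-1\}\bmod\Delta$, so $T_i\subseteq D$. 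Hence every undetermined index $j\notin D$ lies outside $T_i$ for each worker that contributed a coded equation, so $j$ appears with its genuine Cauchy coefficient in every such equation. Restricting the $m$ processed coded equations to the $\Delta-|D|$ columns indexed by $\{1,\dots,\Delta\}\setminus D$ thus yields a submatrix of $\calC$; when $m\ge\Delta-|D|$, any $(\Delta-|D|)\times(\Delta-|D|)$ subblock is a square submatrix of a Cauchy matrix, hence nonsingular, and back-substitution recovers the remaining unknowns.

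For the value of $Q_{cb}$ the lower bound is already in hand from the uncoded argument, so it remains to prove that any state with $\sum_i w_i\ge Q_{cb}$ decodes; equivalently, that the maximum of $a+m$ over all \emph{non-decoding} states equals $Q_{cb}-1$, where $a=\sum_i\min(w_i,\ell_u)$ counts processed uncoded blocks. A non-decoding state has $m\le(\Delta-|D|)-1$, so writing $k=\Delta-|D|\ge 1$ for the number of uncovered indices, $a+m\le a+k-1$. The combinatorial heart is to bound $a$ for a fixed $k$: since each worker processes a prefix of its cyclic window and must stop before its first uncovered index, I would argue that placing the $k$ uncovered indices \emph{consecutively} minimizes the total number of blocked blocks, so that $a\le\frac{r_u(r_u-1)}{2}+(\Delta-r_u-k+1)r_u$. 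Substituting and using $\ell_u=r_u$ gives $a+m\le\Delta r_u-\tfrac{r_u}{2}(\ell_u+1)-(k-1)(r_u-1)$, which is nonincreasing in $k$; hence the worst case is $k=1$ (a single missed block, with no coded blocks processable without triggering recovery), yielding exactly $Q_{cb}-1$ and matching the benchmark of Theorem~\ref{thm:Q_by_Delta_bound}.

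For the straggler resilience I would instantiate the recovery criterion in the scenario where $s$ workers are completely idle and the surviving $n-s$ workers finish all $\ell=\ell_u+\ell_c$ of their blocks, so that $m=(n-s)\ell_c$ and $D$ is the union of the surviving windows. An index $j$ is uncovered exactly when all $\ell_u$ cyclically consecutive workers whose windows contain it are stragglers, and the number of such indices is maximized by choosing the $s$ stragglers to be consecutive, giving $\max(0,\,s-\ell_u+1)$ uncovered indices. By the criterion, resilience against \emph{every} placement of $s$ stragglers is therefore equivalent to $(n-s)\ell_c\ge s-\ell_u+1$. Solving this linear inequality yields $s\le\frac{n\ell_c+\ell_u-1}{\ell_c+1}$, and substituting $\ell_c=n\gamma_c$, $\ell_u=n\gamma_u$ and taking the integer part reproduces the claimed $\big\lfloor\frac{n^2\gamma_c+n\gamma_u-1}{n\gamma_c+1}\big\rfloor$.

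The main obstacles are the two extremal combinatorial facts underlying these clean bounds: that consecutive placement of uncovered indices maximizes the processed uncoded count $a$ (needed for $Q_{cb}$), and that consecutive placement of stragglers maximizes the number of uncovered indices (needed for resilience). Both are intuitively the ``most concentrated'' configurations, yet each requires a careful shifting/exchange argument to confirm that spreading the indices or the stragglers apart can only increase blockage or decrease coverage, respectively. Everything else—the Cauchy nonsingularity and the arithmetic of the two bounds—becomes routine once the recovery criterion $m+|D|\ge\Delta$ is in place.
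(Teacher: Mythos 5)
Your proposal is correct, but for the first claim it takes a genuinely different route than the paper. The paper handles an arbitrary mixed state $\bfw'(t)$ by a reduction: it caps every worker that touched coded blocks at $\ell_u$, redistributes the excess into the remaining workers to obtain an all-uncoded state $\tilde{\bfw}(t)$ of total $Q_{cb}$, invokes the uncoded result (Theorems \ref{thm:Q_by_Delta_bound}/\ref{thm:cyclic_uncoded_scheme}) as a black box to conclude $\tilde{\bfw}(t)$ covers all $\Delta$ blocks, and then counts equations versus unknowns via $\sum_{i\le\alpha}(w'_i(t)-\ell_u)\ge|\calB|\ge|\calB\setminus\calA|=\Delta-|\calA|$ before citing Cauchy invertibility. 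You instead prove an exact decodability criterion $m+|D|\ge\Delta$ and then directly maximize $a+m$ over non-decoding states, parametrized by the number $k$ of uncovered indices. Your route buys two things: (i) it makes rigorous a step the paper glosses over---namely why ``any square submatrix of a Cauchy matrix is invertible'' applies at all, given that the stored coded blocks have coefficients zeroed out on the window $T$ (your observation that coded-at-bottom placement forces $T_i\subseteq D$, so unknown columns always carry genuine Cauchy entries, is exactly the missing justification); and (ii) it gives an if-and-only-if characterization that serves both the $Q_{cb}$ claim and the resilience claim uniformly. The cost is that your $Q_{cb}$ argument needs the extremal fact that a consecutive uncovered set maximizes $a$ for every $k$, whereas the paper's redistribution trick sidesteps that entirely for this part; your arithmetic ($a+m\le\Delta r_u-\frac{r_u}{2}(\ell_u+1)-(k-1)(r_u-1)$, worst case $k=1$) checks out. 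For the resilience claim your argument is essentially the paper's: the consecutive-stragglers-are-worst covering fact you flag is precisely the claim the paper itself defers to its full version (``any $k$ workers have at least $\min(\ell_u+k-1,\Delta)$ uncoded blocks''), and your inequality $(n-s)\ell_c\ge s-\ell_u+1$ is algebraically identical to the paper's $\ell_u+(k-1)+k(\ell-\ell_u)\ge\Delta$ with $k=n-s$. So both of your flagged exchange-argument gaps are at or above the paper's own level of rigor, and the overall proposal is sound.
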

\begin{proof}
%The proof for the bound of $Q$ is quite similar to the proof of Theorem \ref{thm:cyclic_uncoded_scheme}, except we have only $\ell_u$ number of uncoded blocks. Thus for $j = 1, 2, 3, \dots, \Delta$, we can write
%\begin{align*}
%Q_j & = \left[ 1 + 2 + 3 + ... + (\ell_u - 1) \right] + ( n - r_u ) \ell_u \\
%& = \frac{(\ell_u - 1)\ell_u}{2} + n \ell_u - \ell_u^2  \text{~(as $r = \ell$)}\\
%& = n \ell_u - \frac{\ell_u}{2}( \ell_u + 1 ).
%\end{align*} Again using the fact that $n = \Delta$ and $\ell_u = r_u$, we have
%\begin{align*}
%Q_j = \Delta r_u - \frac{r_u}{2}( \ell_u + 1 ).
%\end{align*} Thus, $Q_j$ is independent of $j$ and therefore
%\begin{align}
%Q = \Delta r_u - \frac{r_u}{2}( \ell_u + 1 ) + 1
%\end{align}

We need to show that for any pattern of $Q_{cb}$ blocks the master node can decode $\bfA \bfx$. Towards this end, from the discussion in Section \ref{sec:uncoded_scheme}, we know that any pattern of $Q_{cb}$ uncoded blocks allows the recovery of $\Delta$ distinct blocks. In other words for any computation state vector $ \bfw(t) =[w_1(t)~w_2(t)~ \dots~w_n(t)]$ such that $w_i(t) \leq \ell_u$ and $\sum_{i=1}^n w_i(t) \geq Q_{cb}$ the master node can decode. Now, consider a vector $\bfw'(t)$ such that (w.l.o.g.) $w'_1(t) \dots w'_\alpha(t) \geq \ell_u + 1$ and $w'_{\alpha+1}(t), \dots, w'_n(t) \leq \ell_u$ and $\sum_{i=1}^n w'_i(t) \geq Q_{cb}$, i.e., the first $\alpha$ worker nodes process coded blocks whereas the others do not. It is not too hard to determine a different vector $\tilde{\bfw}(t)$ with the following properties.
\begin{align*}
\tilde{w}_i(t) &= \begin{cases}
\ell_u & ~1 \leq i \leq \alpha,\\
w'_i(t) + \beta_i & \alpha+1 \leq i \leq n,
\end{cases}
\end{align*}
where $\beta_i$'s are positive integers such that $w'_i(t) + \beta_i \leq \ell_u$ and $\sum_{i=1}^n \tilde{w}_i(t) = Q_{cb}$. Thus, $\tilde{\bfw}(t)$ corresponds to a pattern of $Q_{cb}$ uncoded blocks that recovers $\Delta$ distinct blocks.

Now, we compare the vectors $\bfw'(t)$ and $\tilde{\bfw}(t)$. Let the uncoded symbols in $\bfw'(t)$ be denoted by the set $\calA$. Then the set of uncoded symbols in $\tilde{\bfw}(t)$ can be expressed as $\calA \cup \calB$ where the set $\calB$ results from the transformation above. It is evident that for computation state vector $\bfw'(t)$ the master node has $\sum\limits_{i=1}^{\alpha} \left( w'_i(t) - \ell_u \right)$ equations with $\Delta - |\calA|$ variables. Now,
\begin{align*}
\sum\limits_{i=1}^{\alpha} \left( w'_i(t) - \ell_u \right) \geq |\calB|
&\geq |\calB \setminus \calA| = \Delta - |\calA|.
\end{align*}
In particular, this establishes that we have at least as many equations as variables. As, any square submatrix of a Cauchy matrix is invertible, we have the required result.

To establish the straggler resilience of our construction we identify the set of worker nodes that contain the least number of uncoded blocks. With some work (the details appear in the full version of the paper) it can be established that any $k$ workers have at least $\min(\ell_u + k - 1,\Delta)$ uncoded blocks. For our construction, these correspond to picking $k$ consecutive worker nodes. Thus, we are trying to determine the minimum value of $k$ such that
%Firstly, we count the minimum number of uncoded job assignments from $k$ different worker nodes. Suppose these $k$ worker nodes are $1\le i_1,\cdots, i_k \le n$. Consider the following $k$ distance $d_1 = i_2-i_1, \cdots, d_{k-1} = i_k-i_{k-1}, d_{k} = i_1+n-i_{k}$, which define the distance between all of consecutive worker nodes. Now we consider two cases. In the first case, $d_t<l_u$ for $t = 1, 2, \cdots, k$.  From the definition of $T$ (Algorithm \ref{Alg:Cyclic_Partially_Uncoded}, Line 6), these $k$ worker nodes contain all $\Delta$ uncoded assignments. In the second case, there exist some $d_t$ such that $d_t\ge l_u$. Then worker node $i_t$ and $i_{t-1}$ do not have any uncoded assignment in common. W.l.o.g, let $t=1$, $i_1$ and $i_n$ do not have any uncoded assignment in common. Then we start the assignment counting argument from $i_1$,  which has $l_u$ distinct uncoded assignments. It is easy to check  $i_t$ has to contribute $\min\{i_t-i_{t-1}, l_u\}$ new assignments for $t=1,\cdots,k$. In these $k$ worker nodes, there are totally
%\begin{align*}
%l_u+\sum_{t=1}^k \min\{i_t-i_{t-1}, l_u\}\ge l_u+k-1
%\end{align*}
%different assignments.    %Now suppose we choose $k$
%Then we can conclude $k$ different worker nodes can have maximum $\min\{\ell_u + k - 1,\Delta\}$ uncoded assignments in common in the worst case.
%
%If we think about choosing $k$ workers, then we need to ensure that
\begin{align*}
\ell_u + (k - 1) + k (\ell - \ell_u) \geq \Delta
\end{align*} which further implies
\begin{align*}
k \geq \frac{n - \ell_u + 1}{\ell - \ell_u + 1} = \frac{n - n \gamma_u + 1}{n \gamma - n \gamma_u + 1}
\end{align*} as $n = \Delta$. So, if the system is resilient to $s$ stragglers then
\begin{align*}
s \leq \floor[\bigg] {n - \frac{n - n \gamma_u + 1}{n \gamma - n \gamma_u + 1}} = \floor[\bigg]{\frac{n^2 \gamma_c + n \gamma_u - 1}{n \gamma_c + 1}}.
\end{align*} %which indicates that the system would be resilient to maximum $
%\floor[\bigg]{\frac{n^2 \gamma_c + n \gamma_u - 1}{n \gamma_c + 1}}$ stragglers.
\end{proof}
%\aditya{Need a remark or example which shows that if coding is only done on last row then we can get optimality}
\begin{remark}
We emphasize that the construction in Algorithm \ref{Alg:Cyclic_Partially_Uncoded} is not isomorphic to a MDS code, even if $r_u = 1$.
\end{remark}
\begin{example}
\label{eg:coded_bottom_example}
%As an example, we can consider Fig. \ref{fig3b}, where we have  $\Delta = n = 5$, $\gamma = \frac{3}{5}$ and $\gamma_u = \frac{2}{5}$.
Consider again the setting of Example \ref{eg:uncoded_cons} where $\Delta = n = 5$, $\gamma = \frac{3}{5}$. Suppose that we set $\gamma_u = \frac{2}{5}$ and $\gamma_c = \frac{1}{5}$.
%Some coded jobs are assigned in the last row of each of the workers, which is the major difference from Fig. \ref{fig3}.
This scheme is resilient to $\floor[\bigg]{\frac{n^2 \gamma_c + n \gamma_u - 1}{n \gamma_c + 1}} = 3$ stragglers and it can be verified that $\bfA \bfx$ can be computed once any $ Q = \Delta r_u - \frac{r_u}{2}( \ell_u + 1 ) + 1 = 8$ block rows have been processed. Thus, we can conclude that introducing a single coded block in each worker (at the bottom), helps to improve both $Q$ and the straggler resilience of the system as compared to an uncoded system.

%Thus we can conclude that Algorithm \ref{Alg:Cyclic_Partially_Uncoded} outperforms Algorithm \ref{Alg:Cyclic_Uncoded} in both ways in price of adding coding in the last row blocks.

%\begin{figure}[t]
%\centering
%\includegraphics[width=85mm]{fig3b.pdf}
%\caption{Dividing matrix $A$ into five sub-matrices and assigning to five workers each of which has been assigned two uncoded and one coded jobs}
%\label{fig3b}
%\end{figure}
\end{example}
\subsection{Coded blocks at the top}
The situation is quite different when we consider the placement of the coded blocks at the top of the worker nodes. In this case a given worker node only processes uncoded blocks after having processed $\ell_c$ coded blocks. Thus, if $x$ coded blocks have been processed by the worker nodes before starting work on the uncoded blocks, it suffices if {\it any} $\Delta - x$ distinct uncoded blocks are processed. This weaker requirement allows us to potentially improve the $Q_{ct}/\Delta$ ratio as compared to the previous constructions.

We now develop a lower bound on $Q_{ct}$. Suppose that we consider an arbitrary set of $\beta$ worker nodes that process all their blocks and another set of worker nodes that only contribute $x$ coded blocks. Evidently, in this case the total number of coded blocks is $x + \ell_c \beta$. Let $\calA$ denote the set of distinct uncoded blocks obtained from the chosen set of $\beta$ worker nodes. We note that
\begin{align*}
&Q_{ct} \geq x + \ell \beta + 1, \text{~when}\\
&x + \ell_c \beta + |\calA| < \Delta.
\end{align*}
This is because, we do not have enough equations to decode the $\Delta - |\calA|$ unknowns.

Next, we use another averaging argument. We calculate the average size of $\calA$ when considering all possible $\binom{n}{\beta}$ worker nodes via a double counting argument.

%We can assume that the master receives the computed block-products from $k$ workers, so that $(n-k)$ workers are full-stragglers. We assume that the replication factor in the uncoded assignments of block-jobs is $r_u$. However we can not be certain that all the uncoded block-jobs assigned to those $k$ workers have been replicated $r_u$ times among those workers. So, at this point we want to find the average number of different uncoded block jobs assigned in those $k$ workers.

Consider a bipartite graph $G$, whose vertex set is $\calU \cup \calV$. Each element of $\calU$ is the set of uncoded blocks contained in a particular set of $\beta$ workers. Thus, the cardinality of $\calU$ is $|\calU| = {n \choose \beta}$. The set $\calV$ is the set of all possible uncoded blocks so that $|\calV| = \Delta$. There is an edge between $u \in \calU$ and $v \in \calV$ if $v \in u$. The degree of $v \in \calV$ in $G$ can be computed by observing that there are ${n - r_u \choose \beta}$ sets that do not contain $v$. Therefore, the degree of $v$ is ${n \choose \beta} - {n - r_u \choose \beta}$. Thus, the average degree of the nodes in $\calU$ is given by
\begin{align*}
\bar{d}_{\calU} = \Delta \times \left[ 1 - \frac{{n - r_u \choose \beta}}{{n \choose \beta}} \right].
\end{align*}
It follows that if
\begin{align*}
x + \ell_c \beta + \bar{d}_{\calU} < \Delta,
\end{align*}
 there is at least one choice of $\beta$ worker nodes that will not allow the decoding of $\bfA \bfx$. Our lower bound on $Q_{ct}$ can be derived by solving the following optimization problem.

%Now we want to find some lower bound $Q_{t\ell}$ for $Q_t$, so that $Q_t \geq Q_{t\ell}$. We can assume that the master has received some coded and some uncoded block products from the workers, where the uncoded block products come from $\beta$ number of workers, and the master obtains $x$ coded block-products from some other workers. Thus we can say that if
%\begin{align*}
%x + \beta \ell_c + \Delta \times \left[ 1 - \frac{{(n - r) \choose k}}{{n \choose k}} \right] < \Delta
%\end{align*} then the master can not yet decode $\bfA \bfx$, which means that the master needs at least one more block-product from other $(n - \beta)$ workers. It leads us to find the lower bound from the following optimization problem.
\begin{equation}
\label{opt}
\begin{aligned}
\textrm{maximize} \; \; \; &  x + \ell \beta + 1 \\
\textrm{subject to} \; \; & \left( x +  \ell_c \beta \right) < \Delta  \left[\frac{{n - r_u \choose \beta}}{{n \choose \beta}} \right].
\end{aligned}
\end{equation}

%\begin{figure}[t]
%\centering
%\includegraphics[width=65mm]{fig3c.pdf}
%\caption{A bipartite graph between sets of different choices of workers and their corresponding uncoded job assignments}
%\label{fig3c}
%\end{figure}

\begin{example}
\label{eg:coded_top_example}

%\begin{figure}[t]
%\centering
%\includegraphics[width=85mm]{fig3d.pdf}
%\caption{Dividing matrix $A$ into $\Delta = 15$ sub-matrices and assigning to $n = 15$ workers each of which has been assigned $\ell_c = 1$ coded and $r_u = 3$ uncoded jobs}
%\label{fig3d}
%\end{figure}

We can consider a $\langle n, \ell_u, \ell_c, \Delta, r_u \rangle = \langle 15, 3, 1, 15, 3\rangle$-top system and derive the bound $Q_{ct} \geq 18 > \Delta = 15$ by solving the optimization problem in (\ref{opt}). The optimal setting turns out to be $x = 1$ and $\beta = 4$.%an example with $n = 15$ workers, each of which can store $\gamma = \frac{4}{15}$ part of matrix $\bfA$. Here we utilize $\gamma_c = \frac{1}{15}$ fraction of the memory as coded assignments, and the rest, $\gamma_u = \frac{3}{15} = \frac{1}{5}$ fraction as the uncoded assignments. We partition matrix $\bfA$ into $\Delta = 15$ parts, and there will be $\ell = 4$ block rows of assignments, where only the first row would be the assignment of the coded jobs. The last three rows would be some uncoded assignments, thus the replication factor for the uncoded portion is $r_u = 3$. So in this case, according to \eqref{opt}, we can formulate the problem as
%\begin{align*}
%\textrm{maximize} \; \; \; &  x + 4 \beta + 1 \\
%\textrm{subject to} \; \; & \left( x +  \beta \right) < 15 \times \left[\frac{{12 \choose \beta}}{{15 \choose \beta}} \right]
%\end{align*}
%Thus the optimal solution of this problem is, $Q_{tl} = 18$ is a lower bound on $Q$ of these types of constructions, obtained for $x = 1$ and $\beta = 4$, where
%\begin{align*}
%x +  \beta = 5 \; \;\; \; \;\; \textrm{and} \; \;\; \; \; 15 \times \left[\frac{{12 \choose \beta}}{{15 \choose \beta}} \right] \approx 5.44
%\end{align*}
\end{example}

\begin{example}
\label{eg:coded_top_example2}
Continuing our discussion of the setting in Examples \ref{eg:uncoded_cons} and \ref{eg:coded_bottom_example}, if we assign the coded blocks at the top rows of the workers and apply cyclic scheme for the uncoded portion, we can show that $Q_{ct} = 6$ while being resilient to $s=3$ stragglers. Thus, moving the coded rows to the top provides the best construction in terms of $Q$ and straggler resilience.
\end{example}

\section{Conclusion}
In this paper we have formulated a new model for distributed coded matrix-vector multiplication. Our model allows us to leverage partial work performed by stragglers while controlling the level to which coding is utilized in the solution.  We propose lower bounds on the required computation from the worker nodes in the worst case and present matching constructions in certain cases. Our proposed model demonstrates that the ordering of the computations within different worker nodes plays an important role in the overall job execution time. This in turn leads to new (to our best knowledge) code design problems that should be interesting to investigate.

%
%In this paper, we have explored about how to utilize the partial calculations done by the stragglers which can make the whole system faster than before. First we proposed an uncoded scheme which does not need any decoding from the master node, but it may need a slight more time for the workers to finish their respective jobs. Next, a coded computation scheme has been proposed where the workers are assigned some coded copies of tasks which helps to mitigate the ``slowing down'' effect caused by the stragglers. This scheme incorporates quite little amount of coding in each of the worker nodes which would decrease the computation complexity of the workers and reduce the decoding complexity of the master node in comparison to the available coded methods. Moreover, our scheme also takes account the partial computations computed by a straggler which would enhance the speed of the job in a significant margin. Necessary motivating examples have been shown for matrix-vector multiplication case where any worker can store only some fraction of the whole matrix. Our future works include about how to reduce the number of coefficients in the coded scheme, which would further reduce the computation complexity of the worker nodes and decoding complexity of the master node.
%
%\bibliographystyle{IEEEtran}
%\bibliography{citations}

% Generated by IEEEtran.bst, version: 1.14 (2015/08/26)

\end{document}